\documentclass[letterpaper]{IEEEtran}
\usepackage{float}
\usepackage[english]{babel}
\usepackage{amsmath,amssymb}
\usepackage{times}
\usepackage{relsize}
\usepackage{graphicx}
\usepackage{url}
\usepackage{booktabs}
\usepackage{multirow}
\usepackage{mparhack}
\usepackage{subfigure}
\usepackage{authblk}
\usepackage{amsthm}
\usepackage[noend]{algorithmic}
\usepackage[linesnumbered,ruled,vlined]{algorithm2e}

\usepackage{array}
\usepackage{cite}
\usepackage{eqparbox}
\usepackage{mdwmath}
\usepackage{balance}
\usepackage{epsfig}
\usepackage{xcolor}
\usepackage{xfrac}
\usepackage{mathtools}
\usepackage{bm}
\usepackage{amsfonts}
\usepackage[all]{xy}
\usepackage{etoolbox}
\usepackage{graphicx}

\DeclareMathAlphabet{\mathantt}{OT1}{antt}{li}{it}
\DeclareMathAlphabet{\mathpzc}{OT1}{pzc}{m}{it}

\DeclarePairedDelimiter\abs{\lvert}{\rvert}%
%

\usepackage{accents}
\newcommand\munderbar[1]{%
  \underaccent{\bar}{#1}}

\newtheorem{theorem}{Theorem}

\setlength{\affilsep}{0.5em}

\DeclareFontFamily{OT1}{pzc}{}
\DeclareFontShape{OT1}{pzc}{m}{it}%
  {<-> s * [1.1] pzcmi7t}{}
\DeclareMathAlphabet{\mathpzc}{OT1}{pzc}%
                     {m}{it}

\DeclareMathOperator{\argmin}{\arg\min}

\makeatletter
\patchcmd{\@maketitle}
  {\addvspace{0.75\baselineskip}\egroup}
  {\addvspace{-1.45\baselineskip}\egroup}
  {}
  {}

  \def\BibTeX{{\rm B\kern-.05em{\sc i\kern-.025em b}\kern-.08em T\kern-.1667em\lower.7ex\hbox{E}\kern-.125emX}}

\makeatother

\title{Optimal Scheduling of Content Caching\\ Subject to Deadline}

\author{Ghafour Ahani and Di Yuan,
\IEEEmembership{Senior Member, IEEE}
\thanks{G. Ahani and D. Yuan are with the Department of Information Technology, Uppsala University, 751 05 Uppsala, Sweden
(e-mails: {ghafour.ahani, di.yuan}@it.uu.se).}}

\begin{document}

\maketitle
\begin{abstract}
Content caching at the edge of network is a promising technique to alleviate the burden of backhaul networks. In this paper, we consider content caching along time in a base station with limited cache capacity. As the popularity of contents may vary over time, the contents of cache need to be updated accordingly. In addition, a requested content may have a delivery deadline within which the content needs to be obtained. Motivated by these, we address optimal scheduling of content caching in a time-slotted system under delivery deadline and cache capacity constraints. The objective is to minimize a cost function that captures the load of backhaul links.
For our optimization problem, we prove its NP-hardness via a reduction from the Partition problem. For problem solving, via a mathematical reformulation, we develop a solution approach based on repeatedly applying a column generation algorithm and a problem-tailored rounding algorithm. In addition, two greedy algorithms are developed based on existing algorithms from the literature. Finally, we present extensive simulations that verify the effectiveness of our solution approach in obtaining near-to-optimal solutions in comparison to the greedy algorithms. The solutions obtained from our solution approach are within $1.6\%$ from global optimality.
\end{abstract}
\begin{IEEEkeywords}
 Base station, content caching, deadline, time-varying popularity
\end{IEEEkeywords}

\IEEEpeerreviewmaketitle
\section{Introduction}
\subsection{Motivations}
Whereas the amount of data traffic is exponentially growing, it has been realized that the major portion of the data traffic originates from duplicated downloads of a few popular contents\cite{LiQiu2017}. These duplicated downloads congest the backhaul links, hence lowering the quality of service. It is costly to increase the capacity of backhaul links, hence they should be used more effectively. A promising technique is to store the popular contents on the edge of network such as BSs with caching capability \cite{DongLiu2019,WeiJiang2017,KarthikeyanShanmugam2013}. This technique helps to improve the efficiency of communications systems via providing the contents of interest from the BSs instead of from the core network. In fact, the measurement studies in \cite{5731586,6566245} showed up to $66\%$ of traffic reduction in 3G and 4G networks via caching techniques.


Optimal content caching heavily depends on two main factors, namely the number of requests for the contents and the delivery deadlines of such requests. The number of requests for a content, referred to as the popularity of a content, may vary over time. Therefore, the contents of the cache need to be updated accordingly. An update incurs a downloading cost due to getting contents from the server to the BS cache. It is commonly assumed that a content request needs to be served as soon as it is made. We extend the problem setup and investigate a scenario in which a user can put a deadline on the delivery time of the requested content. To the best of our knowledge, the joint impact of delivery deadline and  content downloading cost in content caching has not been studied in the literature. In order to close this gap, we study content caching along time in a BS  with limited caching capacity. We address optimal scheduling of cache updates taking into account the downloading cost subject to delivery deadline and cache capacity constraints.

\subsection{Related Works}
Content caching has been studied in various system scenarios in the context of wireless communication networks. We provide a review with emphasis on the recent developments. We refer the reader to \cite{8327582} for  a comprehensive survey.


The works such as \cite{KarthikeyanShanmugam2013,7179394,7417343,7562037,7959865} studied content caching in BSs when the probability distributions of contents are known. In~\cite{KarthikeyanShanmugam2013} the objective was to minimize the expected downloading time of contents. In~\cite{7179394,7417343}~collaborative content caching among BSs was considered with the objectives of minimizing an operational cost and average downloading delay, respectively. In~\cite{7562037} decentralized content caching was studied with the presence of multi-hop communications. In~\cite{7959865} the user's hit probability was maximized.

The studies in \cite{6620380,7037523,6884109,7537180,Cost2018Deng} enhanced the system models in the works mentioned above to take into account the impact of user mobility in content caching of BSs.
The works in  \cite{6620380,7037523} took into account the movement of users where the trajectories of users are known. In \cite{6884109}, caching contents in both BSs and users was investigated with the objective of minimizing energy consumption. The works in \cite{7537180,Cost2018Deng} further improved the system model in \cite{6884109} and considered caching on mobile users such that they can obtain their contents of interest from each other via device-to-device (D2D) communications.

In contrast to the aforementioned works, the studies in \cite{6883600,7414014,7227124,7815021,7422747,8629363,8735483,8746649} investigated content caching in BSs when the popularity distributions of contents are unknown. The work in \cite{6883600} determined the popularity of a content based on the previously stored contents. The work in \cite{7414014} computed the popularity of a content using a big dataset, and proposed an optimal content caching algorithm to minimize the delivery time of contents. In \cite{7227124}, the authors estimated the popularity of contents via local interest for the content and then proposed a caching algorithm to maximize the hit rate. In \cite{7815021} an online algorithm is proposed to estimate the popularity of contents based on the incoming requests.
The works in \cite{7422747,8629363,8735483,8746649} proposed  learning-based methods to estimate the popularity of contents.

In all works mentioned so far, the popularity distribution of contents is invariant along time. The studies in \cite{7524381,8836639,8357917,8761832,8892477,ahani2019accounting,rolling2019arx} relaxed this assumption and considered content caching with time-varying popularities.
In \cite{7524381,8836639} caching contents of uniform size was studied, however, the cost of cache updating was neglected. In\cite{8357917}, the authors studied content caching in set of BSs from a learning perspective. In \cite{8761832,8892477} content caching with updates were considered in D2D and vehicle-to-vehicle networks, respectively. In~\cite{ahani2019accounting} content caching in a BS was studied in which the cost of cache updates and freshness of the contents were jointly optimized. In \cite{rolling2019arx} collaborative caching was studied, where the cost of updates is accounted for. In~\cite{ahani2019accounting,rolling2019arx}, the authors assumed a requested content needs to be served instantly after the request is made. This may not be true in some circumstances when a requester can wait before the content is delivered until a time point, that is deadline.

The works just mentioned above are the most related studies to our work in the sense that they have also considered cache updating along time. However, in these investigations either the main effort was devoted to estimating the popularity distributions of contents rather than designing effective content caching algorithms, or the cost of performing updates is neglected, or the deadlines of content requests are not considered. Therefore, we aim to complement the above works and  devote our effort to designing an effective content caching algorithm where the deadline constraints and the cost of cache updates are considered jointly.


\subsection{Our Contributions}
We investigate scheduling of content caching in a BS with limited caching capacity in a time-slotted system under delivery deadline and cache capacity constraints. Our main contribution lies on the joint consideration of time-varying popularity of contents and the deadlines of requested contents. Our objective is to optimally schedule the updates across the time slots so as to minimize the total cost of obtaining the requested contents by users. The main contributions of this work are summarized as follows:
\begin{itemize}
\item
We formally prove the NP-hardness of the problem based on a reduction from the Partition problem.
\item
We provide a mathematical problem formulation. Specifically, the problem is formulated as an integer linear program (ILP), taking into account the size of contents, capacity of the cache, deadlines of requests, and costs of content downloading and  cache updating.

\item

Based on a mathematical reformulation of the problem, we develop an effective solution approach based on a repeated column generation algorithm (RCGA). RCGA runs repeatedly and alternatively two algorithms, namely a column generation algorithm (CGA) and a problem-tailored rounding algorithm (TRA). TRA is specially designed to construct integer solutions from the fractional solutions of CGA. Moreover, RCGA provides an effective lower bound (LB) of global optimum such that the LB can be used to measure the effectiveness of any suboptimal algorithm.
\item
We propose two greedy algorithms based on existing algorithms in the literature. Even though these algorithms cannot provide high-quality solutions, they are of interest because of their low complexity and consequently fast solutions for large-scale problem instances.
\item
Finally, we conduct extensive simulations to verify the effectiveness of RCGA, and greedy algorithms by comparing them to the LB. Simulations results manifest that the solutions obtained from RCGA and the greedy algorithms are within $1\%$ and $20\%$ of global optimum, respectively.
\end{itemize}

\section{System Scenario and Complexity Analysis}

\subsection{System Scenario}\label{System_Scenario}
 The system scenario consists of a content server, a base station (BS), $U$ users within the coverage of the BS, and $F$ contents. The set of users is denoted by $\mathcal{U}=\{1,2,\dots,U\}$. The server has all the contents, and the BS is equipped with a cache of size $S$. Denote by $\mathcal{F}=\{1,2,\dots,F\}$ the set of contents.
Denote by $l_f$ the size of content $f\in \mathcal{F}$. The system scenario is shown in Fig. \ref{SystemScenario}.
\begin{figure}[ht!]
\centering
\includegraphics[scale=0.35]{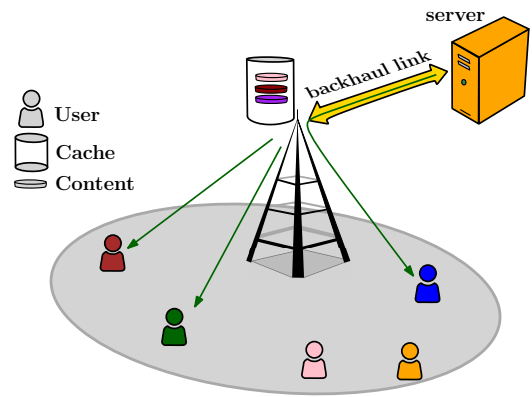}
\begin{center}
\caption{System scenario.}
\label{SystemScenario}
\end{center}
\end{figure}

We consider a time-slotted  system in which a time period is divided into $T$ time slots. Denote by $\mathcal{T}$ the set of time slots with $\mathcal{T}=\{1,2,\dots,T\}$. At the beginning of each time slot, the contents of the cache are subject to updates. Namely, some stored contents may be removed from the cache and some new contents may be added to the cache by downloading from the server.

The popularity of a content is determined by the number of requests for the content. In our model, user $u\in \mathcal{U}$, requests at most $R_u$ contents within the $T$ time slots based on its interest. The set of requests for user $u$ is denoted by $\mathcal{R}_u$. The length of a time slot is long enough to complete the downloading process of the requests from the BS or the server. We assume the time of making each request is known or can be predicted via using a prediction model \cite{Zhang2018Using}. In addition, each request has a deadline before which the requested content must be delivered to the user. For user $u$ and its $r$-th request, the requested content, the time slot of request, and the deadline of request, are denoted by $h(u,r)$, $o(u,r)$, and $d(u,r)$, respectively.

A content may become available or unavailable in the cache from a time slot to another due to cache updates. A content is either downloaded from the cache if the content is available in the cache in at least one of the time slots between $o(u,r)$ and $d(u,r)$, or, otherwise from the server. Denote by $c_s$ and $c_b$ the costs for downloading one unit of data from the server and from the cache, respectively. Intuitively, $c_s>c_b$ to encourage downloading from the cache. The time duration for downloading data from the server to the BS is neglected as the backhaul capacity is significantly higher than that of wireless access. The problem of optimally scheduling content caching subject to deadline is abbreviated to SCCD. The objective is to minimize the total cost of content downloading.

\subsection{Complexity Analysis}
In this section, we formally prove the  NP-hardness of the problem based on a reduction from the Partition problem.
\begin{theorem}
SCCD is NP-hard.
\end{theorem}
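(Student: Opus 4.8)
The plan is to reduce the (NP-complete) Partition problem to the decision version of SCCD. Recall that a Partition instance is a multiset of positive integers $a_1,\dots,a_n$ with $\sum_{i=1}^{n} a_i = 2B$, and the question is whether some index set $I\subseteq\{1,\dots,n\}$ satisfies $\sum_{i\in I} a_i = B$. Given such an instance, I would construct an SCCD instance in polynomial time as follows: introduce $F=n$ contents with sizes $l_f = a_f$, set the cache capacity to $S=B$, and create a demand pattern in which each content $f$ is requested $k$ times (by $k$ distinct users, each issuing one request whose window contains a single common time slot), where the integer $k$ and the unit costs $c_s>c_b$ are chosen so that caching any content is strictly profitable. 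The deadlines $o(u,r),d(u,r)$ are set so that every request can be served from the cache only if its content resides in the cache during one fixed slot; this forces all profitable caching decisions to compete for the same capacity $S=B$.

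The crux of the argument is that, under this construction, the caching benefit of each content is exactly proportional to its size. Serving all requests for content $f$ from the server costs $k\,c_s\, l_f$, whereas caching $f$ (downloading it once over the backhaul and then serving its $k$ requests locally) costs $c_s l_f + k\,c_b\, l_f$; hence caching $f$ lowers the total cost by precisely $\kappa\, l_f$, with the common constant $\kappa = (k-1)c_s - k\,c_b$, which is positive once $k$ is taken large enough (equivalently $c_s/c_b > k/(k-1)$). Consequently, if the cached set is $I$ (necessarily with $\sum_{i\in I} l_i \le S = B$), the total cost equals $2k c_s B - \kappa \sum_{i\in I} l_i$, a strictly decreasing function of the cached volume $\sum_{i\in I} l_i$. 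Minimizing cost is therefore equivalent to packing a subset of contents of maximum total size into a cache of capacity $B$.

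I would then fix the decision threshold $\Theta = 2k c_s B - \kappa B$ and establish the equivalence in both directions. If the Partition instance is a \emph{yes}-instance, caching the subset $I$ with $\sum_{i\in I} l_i = B$ fills the cache exactly and attains cost $\Theta$. Conversely, since the sizes $l_f$ are integers, any cached subset has total size either equal to $B$ or at most $B-1$; thus a schedule of cost at most $\Theta$ can exist only if some cached subset sums to exactly $B$, i.e. only if the Partition instance is a \emph{yes}-instance. Because the reduction is polynomial and the optimization version of SCCD is at least as hard as its decision version, this establishes the NP-hardness of SCCD.

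The step I expect to be the main obstacle is arguing that no cleverer schedule can beat the subset-packing bound — in particular, ruling out time-varying cache configurations that serve different contents from the cache in different slots while never exceeding $S$ at any single instant. The construction must pin the request windows so that all profitable caching is forced into one common slot, collapsing the temporal degrees of freedom that are the distinctive feature of SCCD; verifying this reduction to a single binding capacity constraint, together with checking that the chosen $k,c_s,c_b$ keep $\kappa>0$ and that the sizes, the threshold, and the number of requests remain polynomially bounded, is where the care is required.
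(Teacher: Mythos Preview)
Your proposal is correct and follows essentially the same reduction as the paper: Partition integers become content sizes, the cache capacity is set to half their sum, each content is requested a fixed number of times so that the caching benefit is proportional to $l_f$, and the decision threshold corresponds to filling the cache exactly. The paper makes two simplifications you should adopt: it sets $T=1$ outright (which trivially eliminates the time-varying configurations you flag as the main obstacle), and it uses the concrete values $k=2$, $c_s=2$, $c_b=1$; also note that in this paper's cost model the cache-update charge is $l_f(c_s-c_b)$ rather than $l_f c_s$, so the per-content gain works out to $(k-1)(c_s-c_b)\,l_f$ rather than your $\kappa\,l_f=((k-1)c_s-kc_b)\,l_f$, though either form yields a gain proportional to $l_f$ and the argument goes through unchanged.
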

\begin{proof}
The proof is based on a polynomial-time
reduction from the Partition problem that is NP-complete \cite{garey1979computers}.
Consider a Partition problem with a set of $\mathcal{N}=\{n_1,\dots,n_N\}$ integers. The task is to determine whether it is possible to partition $\mathcal{N}$ into two subsets $\mathcal{N}_1$ and $\mathcal{N}_2$ with equal sum.

We construct a reduction from the Partition problem as follows.
We set $\mathcal{F}=\{1,\dots,N\}$, $l_f=n_f$ for $f \in \mathcal{F}$, $S=\frac{1}{2}\sum_{f \in \mathcal{F}}l_f$, and $T=1$. In this case, there is no updating cost and we only have downloading cost. The time slots of requests and deadlines for all requests are set to $1$, i.e., $o(u,r)=d(u,r)=1$ for $u \in \mathcal{U}$ and $r \in \mathcal{R}_u$.
Denote by $m_{1f}$ the number of users requesting content $f$ in this time slot. We set $m_{1f}=2$ for $f \in \mathcal{F}$, $c_s=2$, and $c_b=1$.
If content $f$ is cached, the $m_{1f}$ users can download content $f$ from the cache, thus the downloading cost for content $f$ is $m_{1f}l_fc_b+l_f(c_s-c_b)$. Otherwise, the $m_{1f}$ users have to download content $f$ from the server, giving rise to the downloading cost of $m_{1f}l_fc_s$.
That is, if the cache stores content $f$, it will obtain $m_{1f}l_fc_s-m_{1f}l_fc_b-l_f(c_s-c_b)=n_f$ gain. By this construction, the total gain that can be achieved is upper-bounded by $\frac{1}{2}\sum_{f \in \mathcal{F}}l_f$. Now the question is whether we can achieve this gain. Solving the defined instance of SCCD will answer this question and also the Partition problem. Namely, after solving this instance of SCCD, if a total gain of $\frac{1}{2}\sum_{f \in \mathcal{F}}l_f$ is achieved, then the answer to the Partition problem is yes, and the contents inside and outside the cache correspond to the two subsets $\mathcal{N}_1$ and $\mathcal{N}_2$, respectively. Otherwise, the answer to the Partition problem is no. Hence the conclusion.
\end{proof}
\section{Integer Linear Programming Formulation}
\subsection{Cost Model}
Denote by $y_{urt}$ a binary optimization variable which equals one if and only if the $r$-th request of user $u$ is downloaded in time slot $t \in \mathcal{D}_{(u,r)}=\{o(u,r),\dots,d(u,r)\}$ from the cache.
The downloading cost for user $u$ to obtain the content requested in the $r$-th request, denoted by $C_{ur}$, is expressed as:
\begin{equation}
\begin{aligned}
C_{ur}=c_bl_{h(u,r)}\sum_{t=o(u,r)}^{d(u,r)}y_{urt}+c_sl_{h(u,r)}(1-\sum_{t=o(u,r)}^{d(u,r)}y_{urt}).
\end{aligned}
\end{equation}
where the first term indicates that if the content is downloaded before its deadline from the cache, the downloading cost is $c_bl_{h(u,r)}$. Otherwise, it is downloaded from the server with cost $c_sl_{h(u,r)}$.
The downloading cost for completing all requests of user $u$, denoted by $C_u$, is:
\begin{equation}
\label{Ci}
C_u=\sum_{r=1}^{R_u}C_{ur}.
\end{equation}

Thus, the downloading cost for completing all requests for all users, denoted by $C_{download}$, is expressed as:
\begin{equation}
\begin{aligned}
C_{download}=\sum_{u=1}^{U}C_u.
\end{aligned}
\end{equation}
For the cache, the cost due to cache updates is referred to as the updating cost. This cost over the time slots, denoted by $C_{update}$, is expressed as:
\begin{equation}
\begin{aligned}
C_{update}=\sum_{t=1}^{T}\sum_{f=1}^{F}l_f(c_s-c_b)a_{tf},
\label{c_update}
\end{aligned}
\end{equation}
where $a_{tf}$ is a binary variable which equals one if and only if the cache does not store content $f$ in time slot $t-1$, but stores the content in time slot $t$, and $l_f(c_s-c_b)$ is the cost for downloading content $f$ from the server to the cache.
\subsection{Problem Formulation}

In general, as the popularity of contents changes over time,
storing popular contents in each time slot will reduce the downloading cost, but it significantly increases the updating cost. On the other hand, if the stored contents remain unchanged over the time slots, the updating cost is low, but the downloading cost will be high. Based on this, our optimization problem is to minimize the total cost consisting of the downloading and the updating cost by optimizing decisions in terms of caching the contents over the time slots.
Denote by $\bm{x}$ an $F\times T$ matrix of optimization variables for $F$ contents and $T$ time slots:
\[
\bm{x}=\{x_{tf}, t\in \mathcal{T}~\text{and}~f\in \mathcal{F} \}.
\]
where $x_{tf}$ is a binary variable that takes value one if and only if content $f$ is stored in time slot $t$. SCCD can be formulated as an integer linear program (ILP) and shown in (\ref{ILP}).

\begin{figure}[!h]
\begin{subequations}
\begin{alignat}{2}
\text{(ILP)}~~ &\min\limits_{\bm{x},\bm{a},\bm{y}}\quad  C_{download}+C_{update}\\
\text{s.t}. \quad
& \sum_{f\in \mathcal{F}}x_{tf}l_f \leq S,t\in \mathcal{T} \label{const:cacheSize}\\
& a_{tf} \ge x_{tf}-x_{(t-l)f} ,t\in \mathcal{T}\setminus{\{1\}},f\in \mathcal{F}\label{const:a1}\\
& a_{tf} \le 1-x_{(t-1)f},t\in \mathcal{T}\setminus{\{1\}},f\in \mathcal{F}\label{const:a2}\\
& a_{tf} \le x_{tf},t\in \mathcal{T}\setminus{\{1\}},~f\in \mathcal{F}\label{const:a3}\\
& a_{1f}=x_{1f},f\in \mathcal{F}\label{const:a4}\\
&y_{urt}\le x_{th(u,r)}, u \in \mathcal{U}, r \in \mathcal{R}_u,t \in \mathcal{D}_{(u,r)}\label{const:y1}\\
&\sum_{t=o_{ur}}^{d_{ur}}y_{urt}\le1, u \in \mathcal{U}, r \in \mathcal{R}_u\label{const:y2}\\
& x_{tf},a_{tf}\in \{0,1\},t\in \mathcal{T},f\in \mathcal{F}\\
& y_{urt}\in \{0,1\},u\in \mathcal{U},r \in \mathcal{R}_u,t\in \mathcal{D}_{(u,r)}.
\end{alignat}
\label{ILP}
\end{subequations}
\end{figure}
Constraints (\ref{const:cacheSize}) indicate that the total amount of cache space used for storing the contents is less than or equal to the cache capacity in each time slot.
Constraints \eqref{const:a1}, \eqref{const:a2}, \eqref{const:a3}, and \eqref{const:a4} together ensure that $a_{tf}$ is one if and only if the cache does not store content $f$ in time slot $t-1$, but stores the content in time slot $t$.
Constraints (\ref{const:y1}) state that $y_{urt}$ can take value one only if $x_{th(u,r)}=1$, i.e., content $h(u,r)$ is stored in the cache in time slot $t$. Constraints (\ref{const:y2}) say that request $r$ from user $u$ is met in at most one of the time slots between the time slot of request and its deadline.

ILP~\eqref{ILP} can be solved by an off-the-shelf integer programming algorithm from optimization packages. However, for large-scale problem instances solving the problem needs significant computational effort. Therefore, we develop a column generation algorithm and rounding mechanism, presented in Section \ref{alg_design}, to obtain near-to-optimal solutions of SCCD.

\section{Problem Reformulation}
In this section, we provide a reformulation of SCCD that enables a solution approach based on column generation. We will see in Section~\ref{sec:performance} that the algorithm achieves near-to-optimal solutions.

We define sequence $\bm{x}_f=[x_{1f}, x_{2f},\dots,x_{Tf}]^ \mathrm{ T } $ to represent the caching solution of content $f$ over the $T$ time slots.
As $x_{tf}\in\{0,1\}$ for $t \in \mathcal{T}$, in total $K=2^T$ possible sequences exist for content $f$. However, as will be clear later on, the algorithm needs to deal with only a small subset of the candidate sequences.
Denote by $\mathcal{K}$ a set, with $\mathcal{K}=\{1,2,\dots,K\}$. Denote by $w_{fk}$ a binary variable where $w_{fk}=1$ if and only if the $k$-th sequence of content $f$ is selected, otherwise zero. Exactly one of them is used in the solution of the problem, thus $\sum_{k=1}^{K}w_{fk}=1$.
For any given sequence, the total cost of the sequence can be calculated as the sequence contains known caching decisions. The total cost for content $f$ with respect to the $k$-th sequence is denoted by $C_{fk}$ and is expressed in (\ref{Cfk}). Denote by constants $x_{tf}^{(k)}$, $y^{(k)}_{urt}$, and $a^{(k)}_{tf}$ the values of $x_{tf}$, $y_{urt}$, and $a_{tf}$ with respect to the $k$-th sequence, respectively. Note that given the values of $x_{tf}^{(k)}$ the value of $y_{urt}^{(k)}$ can be determined.
\begin{equation}
\begin{aligned}
C_{fk}&=\sum_{u=1}^{U}\sum_{r=1}^{R_u} l_{h(u,r)}[c_b\sum_{t=o(u,r)}^{d(u,r)}y^{(k)}_{urt}+c_s(1-\sum_{t=o(r,h)}^{d(r,h)}y^{(k)}_{urt})]\\
&+\sum_{t=1}^{T}l_fc_sa^{(k)}_{tf}.\label{Cfk}
\end{aligned}
\end{equation}

Based on the above notion, SCCD is reformulated as \eqref{PR}. Constraints~\eqref{MP_C1} formulate cache capacity over the time slots. These constraints have the same meaning as constraints \eqref{const:cacheSize}. Constraints~\eqref{MP_C1} say that exactly one sequence has to be selected for each content. In formulation~\eqref{PR} the deadline and updating constraints (i.e., constraints~\eqref{const:a1}-\eqref{const:y2}) are not present, and they are embedded in the sequences. As can be seen both \eqref{ILP} and \eqref{PR} are valid optimization formulations of SCCD. However they differ in structure.
\begin{figure}[!h]
\vskip -1pt
\begin{subequations}
\begin{alignat}{2}
~~~~~~~~~~&\min\limits_{\bm{w}}\quad  \sum_{f\in \mathcal{F}}\sum_{k\in \mathcal{K}}C_{fk}w_{fk} \label{MPC1}\\
\text{s.t}. \quad
&  \sum_{f\in \mathcal{F}}\sum_{k\in \mathcal{K}}l_fx^{(k)}_{tf}w_{fk} \leq S,t\in \mathcal{T} \label{MP_C1}\\
&\sum_{k\in \mathcal{K}}w_{fk}=1,f\in \mathcal{F}\\
& w_{fk}\in \{0,1 \},f\in \mathcal{F},k\in \mathcal{K}. \label{MP_C2}
\end{alignat}
\label{PR}
\vskip -20pt
\end{subequations}
\end{figure}

\section{Algorithm Design}\label{alg_design}
In this section, we present our solution approach.
We first consider the continuous version of formulation~\eqref{PR} and apply column generation to derive its global optimum. This gives obviously a lower bound to the global optimum of SCCD. Next, if the solution obtained from the column generation algorithm (CGA) is fractional, we use a tailored rounding algorithm (TRA) to obtain integer solutions. Using TRA, some of the decisions in terms of caching will be fixed and CGA will be used again to resolve the new problem subject to these decisions. This process will continue until an integral solution is obtained. We refer to this solution approach as repeated column generation algorithm (RCGA).

\subsection{Column Generation Algorithm}
For some structured linear programming problems, column generation can reduce the computational complexity for solving large-scale scenarios~\cite{viableCG}. The main advantage of using column generation is that the optimal solution can be obtained without the need of considering the set of all possible columns of which the number is typically exponentially many. In column generation, the problem under consideration is decomposed into a so called master problem (MP) and a subproblem (SP). The algorithm iterates between a restricted MP (RMP) and SP. The idea is to start with a very limited set of columns. The algorithm solves the SP to generate one or multiple new columns that improve the objective function of the RMP. This process is repeated until no improving column exists. In SCCD, a column is defined as a value assignment of sequence $[x_{1f}, x_{2f},\dots,x_{Tf}]^ \mathrm{ T }$.
\subsubsection{MP and RMP}
MP is the continuous version of formulation \eqref{PR}. CGA starts with a small subset $\mathcal{K}^\prime_f\subset\mathcal{K}$ for any content $f\in \mathcal{F}$. This leads to a so-called restricted version of the MP problem referred to as RMP, which is expressed in \eqref{RMP}. Denote by $K^\prime_f$ the cardinality of $\mathcal{K}^\prime_f$.
\begin{figure}[!h]
\vskip -2pt
\begin{subequations}
\begin{alignat}{2}
\text{(RMP)}~~~~~~~~~~&
\min\limits_{\bm{w}}\quad  \sum_{f\in \mathcal{F}}\sum_{k\in \mathcal{K}^\prime_f}C_{fk}w_{fk}
 \label{obj:RMP} \\
\text{s.t}. \quad
&  \sum_{f\in \mathcal{F}}\sum_{k\in \mathcal{K}^\prime_f}l_fx^{(k)}_{tf}w_{fk} \leq S,t\in \mathcal{T}\label{const:cachesize}\\
&\sum_{k\in \mathcal{K}^\prime_f}w_{fk} = 1,f\in \mathcal{F}\label{const:1col}\\
& 0\le w_{fk} \le 1,f\in \mathcal{F},k\in \mathcal{K}^\prime_f.
\end{alignat}
\label{RMP}
\vskip -2pt
\end{subequations}
\end{figure}

\subsubsection{Subproblem}
The SP uses the dual optimal solution to generate new columns.
Denote by $\bm{w}^*$ the optimal solution of $\eqref{RMP}$. Denote by $\bm{\pi}^*$ and $\bm{\beta}^*$ the optimal values of the corresponding dual variables of constraints \eqref{const:cachesize} and \eqref{const:1col}, respectively.
Here, $\bm{w}^*=\{w^*_{fk}, f\in \mathcal{F} ~\text{and}~ k\in \mathcal{K}^\prime_f\}$,
$\bm{\pi}^*=[\pi^*_1,\pi^*_2,\dots,\pi^*_{T}]^\mathrm{ T }$ and $\bm{\beta}^*=[\beta^*_1,\beta^*_2,\dots,\beta^*_F]^\mathrm{ T }$.
After obtaining $\bm{w}^*$, checking if $\bm{w}^*$ is the optimum of MP can be determined by finding a column with the minimum reduced cost for each content $f\in \mathcal{F}$. If all these values are nonnegative, then the current solution is optimal. Otherwise, we add the columns with negative reduced costs to their respective sets.

Given $(\bm{\pi}^*,\bm{\beta}^*)$,  the reduced cost of content $f\in \mathcal{F}$ for column $\bm{x}_f=[x_{1f},x_{2f},\dots,x_{Tf}]$ is $C_{f}-\sum_{t=1}^{T}l_f\pi^*_t x_{tf}-\beta^*_f$. Here, $C_{f}$ is expression (\ref{Cfk}) in which $y_{urt}^{(k)}$ and $a_{tf}^{(k)}$ are replaced with their counterparts of optimization variables.
To find the column with minimum reduced cost for content $f\in \mathcal{F}$, we need to solve subproblem SP$_f$, shown in \eqref{SP}. Denote by $\bm{x}^*_f$ the optimal solution of SP$_f$, i.e., $\bm{x}^*_f=[x^*_{1f},x^*_{2f},\dots,x^*_{Tf}]^\mathrm{T}$.
If the reduced cost of $\bm{x}^*_f$ is negative, we add $\bm{x}^*_f$ to $\mathcal{K}^\prime_f$. Note that term $-\beta_f^*$ is a constant and thus dropped from the objective function.

\begin{figure}[!h]
\vskip -20pt
\begin{subequations}
\begin{alignat}{2}
(\text{SP}_f)~~~~~~~&
\min\limits_{\bm{x},\bm{a},\bm{y}}\quad  C_{f}-\sum_{t=1}^{T}l_f\pi^*_t x_{tf} \label{SP_objective}\\
\text{s.t}. \quad
& a_{tf} \ge x_{tf}-x_{(t-1)f},t\in \mathcal{T}\setminus{\{1\}}\\
& a_{tf} \le x_{tf},t\in \mathcal{T}\setminus{\{1\}}\\
& a_{tf} \le 1-x_{(t-1)f},t\in \mathcal{T}\setminus{\{1\}}\\
& a_{1f}=x_{1f}\\
&y_{urt}\le x_{th(u,r)}, u \in \mathcal{U}, r \in \mathcal{R}_u,t \in\mathcal{D}_{(u,r)}\label{constSP:y1}\\
&\sum_{t=o_{ur}}^{d_{ur}}y_{urt}\le1, u \in \mathcal{U}, r \in \mathcal{R}_u\label{constSP:y2}\\
& x_{tf}, a_{tf}\in \{0,1 \},t\in \mathcal{T}\\
& y_{urt}\in \{0,1\},u\in \mathcal{U},r \in \mathcal{R}_u,t\in\mathcal{D}_{(u,r)}.
\end{alignat}
\label{SP}
\vskip -20pt
\end{subequations}
\end{figure}
\begin{figure*}[!t]
\normalsize
\begin{equation}
\label{eq:10}
\begin{aligned}
&C_{f}-\sum_{t \in \mathcal{T}}l_f\pi^*_{tf} x_{tf}\\
&=\sum_{u \in \mathcal{U}}\sum_{r \in \mathcal{R}_u: h(u,r)=f}l_f[ \sum_{k=o(u,r)}^{d(u,r)}y_{urk}c_b
+(1- \sum_{k=o(u,r)}^{d(u,r)}y_{urk})c_s]+\sum_{t=1}^{T}l_f(c_s-c_b)a_{tf}-\sum_{t=1}^{T}l_f\pi^*_{tf} x_{tf}\\
&=\underbrace{\sum_{u\in \mathcal{U}}\sum_{r\in \mathcal{R}_u:h(u,r)=f}l_fc_s}_{Q_f}+
\left[\underbrace{l_f(c_s-c_b)}_{q_{f}}a_{2f}
+\underbrace{(-l_f\pi^*_2)}_{p_{2f}} x_{2f}\right]-
\underbrace{\left[\sum_{u \in \mathcal{U}}\sum_{\substack{r \in \mathcal{R}_u:\\h(u,r)=f\\o(u,r)=1\text{~or~}2 \\d(u,r)\ge2}}l_f(c_s-c_b)y_{urk} \right]}_{\sum_{i=1}^2g_{it}^{\ge2}}.
\end{aligned}
\end{equation}
\hrulefill
\vspace*{4pt}
\end{figure*}
\begin{figure*}
\centering
\includegraphics[scale=0.5]{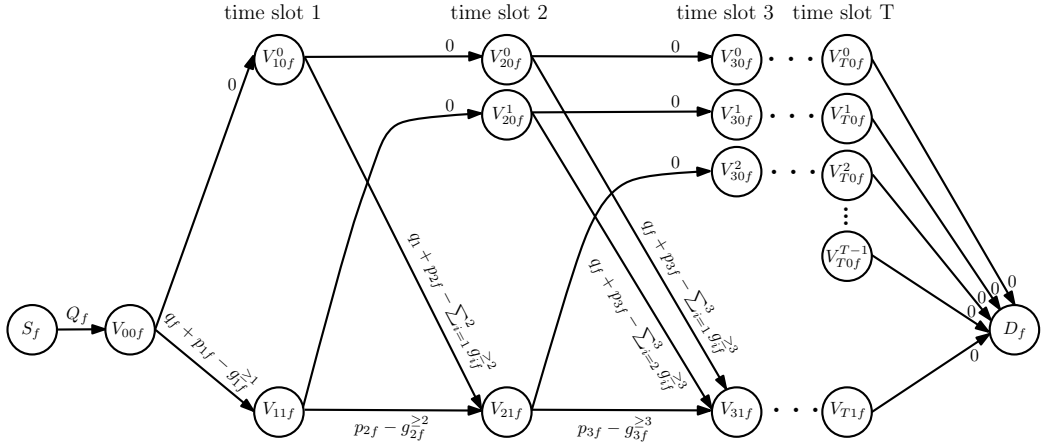}
\begin{center}
\vspace{-4mm}
\caption{Graph of the shortest path problem for SP\text{$_f$}.}
\label{SP1}
\end{center}
\vspace{-8mm}
\hrulefill
\end{figure*}
 Even though SP$_f$ is an ILP, we show that it can be solved in polynomial time by mapping to a shortest path problem.
\subsection{Subproblem as a Shortest Path Problem}
For SP\text{$_f$}, we construct an acyclic directed graph where finding the shortest path from defined
 source to distention is equivalent to solving the subproblem.
Denote by $Q_f$ the total downloading cost for content $f$ when all requests over all time slots are served from the server, i.e., $Q_f=\sum_{u \in \mathcal{U}}\sum_{r \in \mathcal{R}_u:h(u,r)=f}l_fc_s$. Denote by $q_{f}=l_f(c_s-c_b)$
 the updating cost when the content is not stored in the
previous time slot, but is stored in the current time slot. Denote by
$p_{tf}=-l_f\pi^*_t$ the cost related to the dual optimal solution in time slot $t$. Denote by $g_{tf}^{\ge d}$ the cost of the requests made for content $f$ in time slot $t$ with deadline greater than or equal to time slot $d$, that is:
\begin{equation}
\begin{aligned}
g_{tf}^{\ge d}=\sum_{u \in \mathcal{U}}\sum_{\substack{r \in \mathcal{R}_u:\\
h(u,r)=f\\
o(u,r)=t\\
d(u,r)\ge d}}l_f(c_s-c_b).
\end{aligned}
\end{equation}

The graph is shown in Fig.~\ref{SP1}. We first introduce the vertices and then the arcs. Two vertices $S_f$ and $D_f$ are defined to represent the source and destination, respectively.
$V_{00f}$ is a vertex representing $x_{0f}=0$. For time slot $t\in \mathcal{T}$, in total $t+1$ vertices are defined, represented by $V_{t1f}$ and  $V_{t0f}^{k}$, $k \in \{0,\dots,t-1\}$. Vertex $V_{t1f}$ represents decision $x_{tf}=1$ and vertices $V_{t0f}^{k}$, $k \in \{0,\dots,t-1\}$, represent decision $x_{tf}=0$ for the following scenarios. Vertex $V_{t0f}^0$ indicates that the content has not been stored in the cache in time slots $1,\dots,t$, i.e., $x_{jf}=0$ for $j \in \{1,\dots,t\}$. Vertex $V_{t0f}^k$, $k \in \{1,\dots,t-1\}$, indicates the content has been in the cache in time slot $k$, but not in the subsequent time slots until time slot $t$, i.e., $x_{kf}=1$ and $x_{jf}=0$ for
$j \in \{k+1,\dots,t\}$. These vertices are defined to trace the most recent time slot that the content was in the cache. Tracing enables to define the cost of each arc with respect to deadline.

Now, we introduce the arcs and their weights. There is an arc from $S_f$ to $V_{00f}$ with weight $Q_f$. For time slot $1$, there are two outgoing arcs from $V_{00f}$, one to $V_{11f}$ with weight $q_{f}-p_{1f}-g_{1f}^{\ge1}$ and the other to $V_{10f}^0$ with weight zero. Consider time slot $t\in\{2,\dots,T\}$, for vertex $V_{t1f}$ there are $t$ incoming arcs such that one comes from $V_{(t-1)1f}$ with weight $p_{tf}-g_{tf}^{\ge t}$, and the others come from $V_{(t-1)0f}^k$ for $k \in \{0,\dots,t-2\}$ with weight $q_{f}+p_{tf}-\sum_{i=k+1}^{t}g_{if}^{\ge t}$, respectively. Selecting vertex $V_{(t-1)0f}^k$ in the path means that no request has been served in time slots $k+1,\dots,t$ as $x_{jf}=0$ for $j \in \{k+1,\dots,t\}$, hence the third term in the weight is defined to serve all requests that are made in time slots $k+1,\dots,t$ with deadline later than or equal to time slot $t$. For each vertex $V_{t0f}^{i}$, $i \in \{0,\dots,t-2\}$, there is one incoming arc from $V_{(t-1)0f}^{i}$ with weight zero. For vertex $V_{t0f}^{t-1}$ the arc comes from $V_{(t-1)1f}$ with weight zero. There are $T+1$ arcs from vertices $V_{T1f}$  and $V_{T0f}^i$ to $D_f$ all having weight zero.

\begin{theorem}
For each content $f \in \mathcal{F}$, SP$_f$ can be solved in polynomial time as a shortest path problem.
\label{shortest_path}
\end{theorem}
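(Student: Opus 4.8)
The plan is to prove the claimed equivalence by exhibiting a cost-preserving correspondence between the feasible points of SP$_f$ and the $S_f$--$D_f$ paths of the graph in Fig.~\ref{SP1}, and then invoking the standard polynomiality of shortest paths on a directed acyclic graph (DAG). First I would show a one-to-one correspondence between paths and feasible caching sequences; second I would verify that under this correspondence the objective \eqref{SP_objective} of SP$_f$ equals the total weight of the corresponding path. Since the graph is acyclic with a polynomial number of vertices and arcs, its shortest path can be computed in polynomial time (e.g.\ by dynamic programming over a topological order), which yields the minimizer $\bm{x}^*_f$ and hence solves SP$_f$.

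For the correspondence, I would note that a feasible solution of SP$_f$ is determined entirely by the binary sequence $\bm{x}_f=[x_{1f},\dots,x_{Tf}]$: the variables $a_{tf}$ are forced by \eqref{const:a1}--\eqref{const:a4}, and for fixed $\bm{x}_f$ the cost-minimizing $y_{urt}$ is obtained greedily, because $c_s>c_b$ implies that every request $(u,r)$ with $h(u,r)=f$ should be served from the cache whenever content $f$ is stored in at least one slot of its window $\mathcal{D}_{(u,r)}$, subject to \eqref{constSP:y1}--\eqref{constSP:y2}, and from the server otherwise. I would then check that each $S_f$--$D_f$ path selects exactly one vertex per time slot --- $V_{t1f}$ encoding $x_{tf}=1$ and some $V_{t0f}^{k}$ encoding $x_{tf}=0$ --- and that the outgoing arcs of $V_{(t-1)1f}$ and $V_{(t-1)0f}^{k}$ permit precisely the transitions needed to realize any sequence, with the superscript $k$ on $V_{t0f}^{k}$ faithfully tracing the most recent slot in which $f$ was cached. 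This yields the bijection.

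The core of the argument is the cost-equivalence, and this is where I expect the main difficulty. The arc $S_f\to V_{00f}$ contributes the baseline $Q_f$; every arc entering a vertex $V_{t1f}$ contributes $p_{tf}$, matching $-\sum_t l_f\pi^*_t x_{tf}$; and the update cost $q_f$ is charged exactly on arcs entering $V_{t1f}$ from a not-cached predecessor, i.e.\ precisely when $a_{tf}=1$. The delicate part is the savings terms $g_{if}^{\ge d}$: I must show that the total credited savings along a path equals $l_f(c_s-c_b)$ summed over all requests the greedy rule serves from the cache, with neither omission nor double counting. Using the tracing superscript, I would argue that a request made in slot $o(u,r)$ with deadline $d(u,r)$ is credited exactly once, namely at the \emph{first} slot $t\in\mathcal{D}_{(u,r)}$ in which $f$ is (re-)cached: the fresh-caching arc $V_{(t-1)0f}^{k}\to V_{t1f}$ credits $\sum_{i=k+1}^{t}g_{if}^{\ge t}$, i.e.\ all still-unserved requests made in $\{k+1,\dots,t\}$ with deadline $\ge t$, while the continuation arc $V_{(t-1)1f}\to V_{t1f}$ credits only the newly arrived $g_{tf}^{\ge t}$; requests whose windows fall entirely inside an uncached stretch are never credited, matching server service. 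A short case analysis on whether the request's window contains a cached slot then establishes the exact accounting.

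Finally, I would bound the graph size: for each $t\in\mathcal{T}$ there are $t+1$ vertices, so $O(T^2)$ vertices and $O(T^2)$ arcs in total, and all arc weights (including the cumulative $g$-sums, via prefix sums) are computable in time polynomial in $T$ and the number of requests. Since shortest paths on a DAG are solvable in time linear in its size, SP$_f$ is solved in polynomial time, completing the proof.
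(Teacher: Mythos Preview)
Your proposal is correct and follows essentially the same route as the paper's proof: establish a cost-preserving bijection between $S_f$--$D_f$ paths and feasible caching sequences $\bm{x}_f$, then invoke polynomiality of shortest paths on a DAG. If anything, your accounting of the savings terms $g_{if}^{\ge d}$---verifying that each servable request is credited exactly once, at the first cached slot in its window, via the tracing superscript $k$---is more careful than the paper's, which handles this step largely by a worked example rather than a general case analysis.
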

\begin{proof}
We show that the optimal solution of the subproblem can be obtained from the shortest path of the graph defined above.
Assume the optimal solution of SP\text{$_f$}, i.e., $\bm{x}^*$, $\bm{a}^*$, and $\bm{y}^*$ are given. The path is constructed as follows. One of the following three scenarios may happen in time slot $t\in \mathcal{T}$. First, if $x_{tf}=1$, the vertex on the path is $V_{t1f}$. Second, if $x_{1f}=\dots=x_{tf}=0$, the next vertex is $V_{t0f}^0$. Third, if $x_{if}=1$ for time slot  $i\in\{1,\dots,t-1\}$ and $x_{jf}=0$ for all $j=i+1,\dots,t$, the next vertex is $V_{t0f}^i$.  By construction of the graph, this path from $S_f$ to $D_f$ gives the same objective function of SP\text{$_f$} as $\bm{x}^*$, $\bm{a}^*$, and $\bm{y}^*$.

Conversely, assume the shortest path is given. For time slot $t$, if the path contains one of the vertices $V_{t0f}^i$ for $i \in \{0,\dots,t-1\}$, we set $x_{tf}=0$. Otherwise, the path contains vertex $V_{t1f}$, and we set $x_{tf}=1$. As soon as the values of $x_{tf}$ for $t \in \mathcal{T}$ and $f \in \mathcal{F}$ are known, the values of $a_{tf}$ for $t \in\mathcal{T}$ and $f\in \mathcal{F}$ and $y_{urt}$ for $u \in \mathcal{U}$, $f \in \mathcal{F}$, and $t \in \mathcal{D}_{(u,r)}=\{o(u,r),\dots,d(u,r)\}$ can be easily determined. The value of $y_{urt}$ is set to the first time slot that the request can be served. By the construction of the graph, this solution gives the same objective function value as the shortest path. To clarify why this is correct we give an example. Assume that the shortest path $S_f,V_{00f},V_{10f}^{0},V_{21f},V_{30f}^{2},...,D_f$ is given which has length $Q_f+q_f+a_{2f}-\sum_{i=1}^{2}g_{if}^{\ge2}$. Then, we set $x_{tf}=0$ for $t \in \mathcal{T}\setminus \{2\}$ and $x_{2f}=1$, $a_{2f}=1$, and $y_{urt}=1$ for all requests that can be served in time slot~2. With these setting of variables, the objective function has the same value as the length of the shortest path, as shown in \eqref{eq:10}. Based on the rationale illustrated in the example, it is straightforward to conclude the correctness in general.

 Finally, the shortest path problem can be solved in polynomial time~\cite{Cormen2009introduction}.
Hence, the conclusion.
\end{proof}

\begin{algorithm}\label{alg_CGA}
\caption{Column Generation Algorithm (CGA)}
\begin{algorithmic}[1]
\algsetup{linenosize=\tiny}
\small
\REQUIRE $S$, $c_b$, $c_s$, $l_f$ for $f \in \mathcal{F}$, $o(u,r)$, $h(u,r)$ and $d(u,r)$ for $t \in \mathcal{T}, u \in \mathcal{U}, f \in \mathcal{F}, r\in \{1,\dots,R_u\}$
\ENSURE $\bm{w}^*$
\STATE $\mathcal{K}^\prime_f \leftarrow \{\bf{0}^\mathrm{T}\}$, $f \in \mathcal{F}$
\STATE STOP $\leftarrow 0$
\WHILE{(STOP$=0$)}
\STATE Solve RMP and obtain $\bm{w}^*$ and $(\bm{\pi}^*,\bm{\beta}^*)$
\STATE STOP $\leftarrow 1$
\FOR{$f=1$ ~to~ $F$}
\STATE Solve SP$_f$ using $(\bm{\pi}^*,\bm{\beta}^*)$ and obtain $\bm{x}^*_f$
\IF {$C_{f}^*-\sum_{t=1}^{T}l_f\pi^*_t x^*_{tf}-\beta^*_f <0$}
\STATE $\mathcal{K}^\prime_{f} \leftarrow \mathcal{K}^\prime_{f}\cup \{\bm{x}^*_f\}$
\STATE STOP $\leftarrow 0$
\ENDIF
\ENDFOR
\ENDWHILE
\STATE Return $\bm{w}^*$ as the optimal solution
\end{algorithmic}
\end{algorithm}

\subsection{Rounding Algorithm}
As the solution obtained from the RMP (i.e., $\bm{w}^*$)  may be fractional, we need a mechanism to obtain a feasible integer solution. One straightforward way is to round the fractional elements of $\bm{w}^*$. However, this way of rounding has some limitations. First, the solution may easily become infeasible. Second, even if the solution is feasible, it may be far from the global optimum. Third, when an element of $\bm{w}^*$, say $w_{fk}$, becomes fixed in value, the caching decisions of content $f$ for all time slots are made, and consequently there is no opportunity to improve the solution of content $f$.

In order to overcome the above limitations, we make a rounding decision for one content and one time slot
at a time. More specifically, the caching decision of content $f$ in time slot $t$ is made based on the value of $z_{tf}$, and  $z_{tf}$ is the sum of those elements of $\bm{w}^*$ such that the corresponding columns store content $f$ in time slot $t$, that is, $z_{tf}=\sum_{k\in \mathcal{K}^\prime_f}x^{(k)}_{tf}w^*_{fk}$. In fact, the value of $z_{tf}$ can be viewed as an indicator of how probable it is to store content $f$ in time slot $t$ at optimum. In the following we prove a relationship between $\bm{z}$ and $\bm{w}^*$ and then base our algorithm on this result.

\begin{theorem}
For any content $f\in \mathcal{F}$ and $k \in\mathcal{K}_f$, $w^*_{fk}$ is binary if and only if every element of $\bm{z}_{f}$ is binary, where $z_f=[z_{1f},z_{2f},\dots,z_{Tf}]$.
\label{IntegerTheory}
\end{theorem}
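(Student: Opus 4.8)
The plan is to prove the two implications separately, treating $\bm{z}_f$ as a convex combination of the binary columns. Throughout, fix a content $f$ and recall the two structural facts supplied by the RMP: the convexity constraint $\sum_{k\in\mathcal{K}^\prime_f} w^*_{fk}=1$ together with $w^*_{fk}\ge 0$, and the definition $z_{tf}=\sum_{k\in\mathcal{K}^\prime_f} x^{(k)}_{tf} w^*_{fk}$ in which every $x^{(k)}_{tf}\in\{0,1\}$. In vector form, $\bm{z}_f=\sum_k w^*_{fk}\,\bm{x}^{(k)}_f$ is a convex combination of the $0/1$ column vectors $\bm{x}^{(k)}_f\in\{0,1\}^T$, and the whole theorem is essentially the statement that such a combination lands on a vertex of the hypercube if and only if it is trivial.

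For the forward direction (all $w^*_{fk}$ binary $\Rightarrow$ every $z_{tf}$ binary), I would observe that binarity of the $w^*_{fk}$ together with $\sum_k w^*_{fk}=1$ forces exactly one index $k^\star$ with $w^*_{fk^\star}=1$ and all others zero. Then $z_{tf}=x^{(k^\star)}_{tf}\in\{0,1\}$ for every $t$, so $\bm{z}_f$ is binary. This direction is immediate and needs no further work.

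The substance is the converse (every $z_{tf}$ binary $\Rightarrow$ all $w^*_{fk}$ binary), which I would prove by showing that every column carrying positive weight must coincide with $\bm{z}_f$. Take any $k$ with $w^*_{fk}>0$ and any slot $t$. If $z_{tf}=0$, then $\sum_k x^{(k)}_{tf} w^*_{fk}=0$ is a sum of nonnegative terms, so the term for our $k$ must vanish, forcing $x^{(k)}_{tf}=0$. If $z_{tf}=1$, I would rewrite the convexity constraint as $\sum_k (1-x^{(k)}_{tf})w^*_{fk}=1-z_{tf}=0$; again a sum of nonnegative terms vanishes, so $x^{(k)}_{tf}=1$. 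In both cases $x^{(k)}_{tf}=z_{tf}$, and since $t$ was arbitrary, $\bm{x}^{(k)}_f=\bm{z}_f$.

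To finish, I would invoke that the generated columns are pairwise distinct, which holds because CGA starts from a single column and only appends a column whose reduced cost is strictly negative, whereas any column already present is nonbasic-optimal in the RMP and hence has reduced cost $\ge 0$. Thus at most one column equals $\bm{z}_f$; combined with $\sum_k w^*_{fk}=1$, that single column must carry weight one and all others weight zero, so every $w^*_{fk}$ is binary. The step I expect to be the main obstacle is precisely this last observation: the preceding argument shows a $0/1$ point cannot be a proper convex combination of distinct binary columns (the extreme-point property of hypercube vertices), and the only hypothesis that must be made explicit is the distinctness of columns, without which two identical columns could split a weight fractionally while leaving $\bm{z}_f$ binary.
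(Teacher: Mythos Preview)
Your proof is correct and follows essentially the same route as the paper: both directions treat $\bm{z}_f$ as a convex combination of the $0/1$ columns, the forward direction is immediate, and for the converse you both show that every column with positive weight must coincide with $\bm{z}_f$ and then invoke distinctness of the generated columns to force a single column of weight one. Your argument is in fact a bit more explicit than the paper's, which simply asserts that the $x^{(k)}_{tf}$ with positive weight ``must be either all zero or all one'' and that distinct columns ``differ in at least one element''; you spell out the two cases via the complementary sum and justify distinctness through the negative-reduced-cost criterion of CGA.
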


\begin{proof}
For necessity, for any content $f\in \mathcal{F}$, if $w^*_{fk}$ is binary for any $k$, $k\in \mathcal{K}^\prime_f$, it is obvious that all elements of $\bm{z}_{f}$ are binary.
Now, we prove the sufficiency.
For any content $f\in \mathcal{F}$, assume that every element in $\bm{z}_{f}$ is binary. Assume that $w^*_{fk}$ is larger than zero for $k\in \mathcal{K}^{\prime\prime}_f\subseteq \mathcal{K}^\prime$.
As element $z_{tf}=\sum_{k\in \mathcal{K}^{\prime\prime}_f}x^{(k)}_{tf}w^*_{fk}$ is either zero or one, the value of $x_{tf}^{(k)}$ for $k\in \mathcal{K}^{\prime\prime}_f$ must be either all zero or all one. Otherwise, as $\sum_{k \in \mathcal{K}^{\prime\prime}_f}w_{fk}=1$, one of the elements of $\bm{z}_f$ will become fractional. This means that all columns corresponding to $w_{fk}^*$
for $k\in \mathcal{K}^{\prime\prime}_f$ must be the same. Having two columns with the same values violates the fact that the sequences of any two $w^*_{fk}$ differ in at least one element. Therefore, for any content $f\in \mathcal{F}$, if $z_{tf}$ is binary for any $t\in \mathcal{T}$, then $w^*_{fk}$ is an binary for any $k\in \mathcal{K}^\prime_f$. Hence the proof.
\end{proof}

A family of  rounding algorithms can be derived based on how the caching decisions of the contents are made. We do it gradually. First, for content $f$ and time slot $t$, if $z_{tf}=1$ then the decision is to store this content in this time slot, i.e., $x_{tf}=1$. Next, we find the fractional element of $\bm{z}$ being closest to zero or one, and round the value, giving  the caching decision of the corresponding content and time slot. Next, the CGA will be applied subject to the rounded values to obtain the new $\bm{w}^*$. This process is repeated until a feasible integer solution is obtained. Note that a caching decision for a content and time slot, once made, will remain in all the subsequent iterations. An important observation is that the SP$_f$, $f \in \mathcal{F}$, with the giving caching decisions still can be solved via shortest path. If $x_{tf}=1$, we simply remove vertices $V_{j0}^i$, for $j=t,\dots,T$ and $i=1,\dots,t$, and the arcs connected to these vertices from the graph. If $x_{tf}=0$, we remove vertex $V_{t1}$ and its connected arcs.

TRA is presented in Algorithm~\ref{alg_ERA}.
Symbol~$\leftarrow$ is used when a value is assigned to a programming variable and symbol~$\leftleftarrows$ is used when an optimization variable is fixed to a value.
The details of TRA are as follows.
First, in Line~$1$, $\bm{z}$ is calculated.
For each $t \in \mathcal{T}$ and $f \in \mathcal{F}$, if $z_{tf}$ has value one, then TRA fixes $x_{tf}=1$ in SP$_f$ by Line~\ref{fixxto1}. In addition, as $x_{tf}$ is fixed to one, the columns in $\mathcal{K}^\prime_f$ that have value zero in time slot $t$ cannot be used any more and they are discarded. To achieve this, we fix $w_{fk}=0$, $k \in \mathcal{K}^\prime_f$, if $ x^{(k)}_{tf}=0$. This is done by Line $\ref{fixyto0}$.

Second, as long as $\bm{w}^*$ is not an integer solution, then by Theorem~\ref{IntegerTheory} at least one element of $\bm{z}$ must be fractional. The fractional value of $\bm{z}$ being nearest to zero, its corresponding time slot, and content are calculated by Lines~\ref{minz}-\ref{minzloc}, and these are denoted by $\munderbar{z}$, $\munderbar{t}$, and $\munderbar{f}$ respectively. Likewise,
 the fractional value of $\bm{z}$ being nearest to one, its corresponding time slot, and content are calculated by Lines~\ref{maxz}-\ref{maxzloc}, and these are denoted by $\bar{z}$, $\bar{t}$, and $\bar{f}$ respectively. If $\munderbar{z}$ is less than $\bar{z}$, TRA fixes the value of time slot $\munderbar{t}$ to zero by Line~\ref{fixxunderbarto0}. Furthermore, those columns not compatible with the decision are discarded from $\mathcal{K}^\prime_{\bar{f}}$. This is done by Line~\ref{fixyunderbarto0}. Otherwise, TRA checks whether there is enough spare space to store content $\bar{f}$. If yes, then the value of time slot $\bar{t}$ is fixed to one in SP$_{\bar{f}}$ by Line~\ref{fixxbarto1}, and the columns with value zero in time slot $\bar{t}$  are discarded from $\mathcal{K}^\prime_{\bar{f}}$ by Line~\ref{fixybarto0}. If no, the value of time slot $\bar{t}$ is fixed to zero by Line~\ref{fixxbarto0} and the columns with value one in time slot $\bar{t}$ are discarded from $\mathcal{K}_{\bar{f}}$ by Line~\ref{fixybarto01}.

Third, TRA fixes $x_{tf}=0$ for the contents that have size larger than the remained spare cache space. This is done by Lines~\ref{fixto0bysize1}-\ref{fixto0bysize2}.

Finally, the above operations may lead to discarding all columns of a content such that the RMP becomes infeasible. To avoid this, an auxiliary column for each content is added such that the column has value one in the time slots that are fixed to one so far, and zero in the other time slots. This is accomplished by Line~\ref{addingcol2}. Note that the fixed variables remain in effect in all subsequent iterations of RCGA.
\begin{algorithm}\label{alg_ERA}
\caption{Tailored Rounding Algorithm (TRA)}
\begin{algorithmic}[1]
\algsetup{linenosize=\tiny}
\small
\REQUIRE $\bm{w}^*$, $x_{tf}^{(k)}, t \in \mathcal{T}, f \in \mathcal{F}, k \in \mathcal{K}_f$
\STATE Compute $\bm{z}=\{z_{tf}, t \in \mathcal{T}, f \in \mathcal{F}\}$, where $z_{tf}=\sum_{k\in \mathcal{K}^\prime_f}x^{(k)}_{tf}w^*_{fk}$
\STATE   $x_{tf}\leftleftarrows1$ in SP$_f$ if $z_{tf}=1$, $t \in \mathcal{T}, f \in \mathcal{F}$ \label{fixxto1}
\STATE   $y_{fk}\leftleftarrows0$ in RMP if $x^{(k)}_{tf}=0$, $k \in \mathcal{K}^\prime_f, t \in \mathcal{T}, f \in \mathcal{F}$\label{fixyto0}
\STATE $\munderbar{z}\leftarrow\underset{t\in\mathcal{T}, f\in \mathcal{F}~~~~~~~~~~~~~~~~~~~~~~~~~~~} {\min\{z_{tf}| z_{tf}>0~\text{and}~z_{tf}<1\}}$\label{minz}
\STATE $(\munderbar{t},\munderbar{f})\leftarrow\underset{t\in\mathcal{T}, f\in \mathcal{F}~~~~~~~~~~~~~~~~~~~~~~~~~~~~~~~} {\arg\min\{z_{tf}| z_{tf}>0~\text{and}~z_{tf}<1\}}$\label{minzloc}

\STATE $\bar{z}\leftarrow\underset{t\in\mathcal{T}, f\in \mathcal{F}~~~~~~~~~~~~~~~~~~~~~~~~~~~~~~~~} {\min\{1-z_{tf}| z_{tf}>0~\text{and}~z_{tf}<1\}}$\label{maxz}
\STATE $(\bar{t},\bar{f})\leftarrow\underset{t\in\mathcal{T}, f\in \mathcal{F}~~~~~~~~~~~~~~~~~~~~~~~~~~~~~~~~~~~~} {\arg\min\{1-z_{tf}| z_{tf}>0~\text{and}~z_{tf}<1\}}$\label{maxzloc}

\IF {$(\munderbar{z}<\bar{z})$} \label{nearestcheck}
\STATE  $x_{\munderbar{t}\munderbar{f}}\leftleftarrows0$ in SP$_{\munderbar{f}}$\label{fixxunderbarto0}
\STATE   $y_{\munderbar{f}k}\leftleftarrows0$ if $x^{(k)}_{\munderbar{t}\munderbar{f}}=1$, $k \in \mathcal{K}^\prime_{\munderbar{f}}$\label{fixyunderbarto0}

\ELSIF{$(l_{\bar{f}}\le S^\prime)$}
\STATE  $x_{\bar{t}\bar{f}}\leftleftarrows1$ in SP$_{\bar{f}}$\label{fixxbarto1}
\STATE   $y_{\bar{f}k}\leftleftarrows0$ if $x^{(k)}_{\bar{t}\bar{f}}=0$, $k \in \mathcal{K}^\prime_{\bar{f}}$\label{fixybarto0}
\ELSE
\STATE  $x_{\bar{t}\bar{f}}\leftleftarrows0$ in SP$_{\bar{f}}$\label{fixxbarto0}
\STATE   $y_{\bar{f}k}\leftleftarrows0$ if $x^{(k)}_{\bar{t}\bar{f}}=1$, $k \in \mathcal{K}^\prime_{\bar{f}}$\label{fixybarto01}
\ENDIF
\FOR{$t=1$ ~to~ $T$}
\STATE $\mathcal{F}^\prime \leftarrow \{f \in \mathcal{F}| x_{tf} \text{ is fixed to one}\}$
\STATE $S^\prime\leftarrow S-\sum_{f \in \mathcal{F}^\prime}l_f$
\FOR{$f \in \mathcal{F}\backslash \mathcal{F}^\prime$}
\IF {$l_f> S^\prime$}\label{fixto0bysize1}
\STATE  $x_{tf}\leftleftarrows0$ in SP$_{f}$
\STATE   $y_{fk}\leftleftarrows0$ in RMP if $x^{(k)}_{tf}=1$, $k \in \mathcal{K}^\prime_{f}$ \label{fixto0bysize2}
\ENDIF
\ENDFOR
\ENDFOR
\FOR{$f=1$ ~to~ $F$}
\STATE $\mathcal{K}^\prime_{f} \leftarrow
\mathcal{K}^\prime_{f}\cup \{[x_{1f},\dots,x_{Tf}]^T\}$ where $x_{tf}=1$, $t \in \mathcal{T}$, if $x_{tf}$ is previously fixed to one and $x_{tf}=0$ otherwise \label{addingcol2}
\ENDFOR
\end{algorithmic}
\end{algorithm}

\subsection{Framework of RCGA}\label{sec:CGAandERA}
 Note that as none of the variables in the SPs or RMP is fixed  when CGA is applied for the first time (i.e., in the first iteration of Algorithm \ref{alg_CGAandERA}), the cost from CGA provides a lower bound to the global optimum of SCCD. This lower bound can be used to measure the effectiveness of the final solution from Algorithm \ref{alg_CGAandERA} or the solution obtained from any other suboptimal algorithm. The RCGA framework is shown in Algorithm~$\ref{alg_CGAandERA}$. The maximum number of iterations required to obtain a feasible solution is bounded by $F\times T$. Because, each time TRA is used, at least the caching decision of one content in one time slot is made, and as there are $F$ contents and $T$ time slots, Algorithm~$\ref{alg_CGAandERA}$ terminates in at most $F\times T$ iterations.

\begin{algorithm}\label{alg_CGAandERA}
\caption{Framework of RCGA}
\begin{algorithmic}[1]
\algsetup{linenosize=\tiny}
\small
\STATE STOP $\leftarrow 0$
\WHILE{(STOP$=0$)}
\STATE Apply CGA with fixed variable values so far and obtain $\bm{w}^*$
\IF {($\bm{w}^*$ is an integer solution)}
\STATE STOP $\leftarrow 1$
\ELSE
\STATE Apply TRA to $\bm{w}^*$
\ENDIF
\ENDWHILE
\end{algorithmic}
\end{algorithm}

\section{Greedy Algorithms}\label{sec:greed_algs}
In this section, we consider cheap algorithms. We propose two greedy algorithms that deal with one time slot at a time. These algorithms are developed based on two conventional caching algorithms in the literature, i.e., popularity-based caching (PBC) \cite{Ahlehagh2014Video} and random-based caching (RBC) \cite{Balaszczyszy2015Optimal}. In PBC, a content is chosen as a candidate to be stored in the cache based on how frequently it is requested. In RBC, the candidate content will be chosen randomly and proportionally to its popularity. That is, the higher a requested content is, the more likely this content will be selected as a candidate content.
Popularity of content $f$ in time slot $t$ is modeled by the  total number of the requests that must to be satisfied in this time slot, namely, all requests with deadline $t$. Denote by $\mathcal{P}_{tf}$ the set of these requests for content $f$ in time slot $t$. Denote by $P_{tf}$ the cardinality of set $\mathcal{P}_{tf}$. $\mathcal{P}_{tf}$ can be computed as:
\begin{equation}
\begin{aligned}
\mathcal{P}_{tf}=\{(u,r):u \in \mathcal{U} , r \in\mathcal{R}_u , h(u,r)=f, d(u,r)=t\}
\end{aligned}
\end{equation}

The flow of the two algorithms is similar and a general description is as follows. The time slots will be considered one by one starting from the first time slot. The cache is initialized with size of $S$ units of spare capacity. For each time slot under consideration, the algorithms treat contents one by one based on popularity in PBC and randomness in RBC. Once a content is selected as a candidate to be stored in the cache, the algorithms use an updating strategy based on the one in \cite{7524381} to decide whether to store the content in this time slot. The updating strategy is as follows. For candidate content $f$, one of the following scenarios may arise:

\begin{enumerate}
	\item If there is no enough spare space in the cache to store content $f$, the algorithms set $x_{tf}=0$.
    \item If the cache has enough spare space and the content was stored in the previous time slot, the decision is to keep the content, i.e., $x_{tf}=1$.
    \item If there is enough spare space but the content needs to be downloaded from the server, then the algorithms  store the content if it is at least as popular as some of the stored contents in the previous time slot. Specifically, content $f$ should be at least popular as the least popular contents with total size similar to $l_f$. This comparison is due to the fact that storing the candidate content leads to deleting the contents that were in the cache in the previous time slot. Thus, it is beneficial to put this content in the cache only if it is at least as popular as them.
\end{enumerate}
The flow of the two algorithms is shown in Algorithm~\ref{alg:greedy}.

\begin{algorithm}\label{alg:greedy}
\caption{The flow of PBC and RBC}
\begin{algorithmic}[1]
\algsetup{linenosize=\tiny}
\small
\REQUIRE $S$, $l_f$, $c_b$, and $c_s$
\ENSURE $\bm{x}$
\STATE $x_{f0} \leftarrow  0, \forall f \in \mathcal{F}$

\FOR{$t=1$ ~to~ $T$}\label{greedy:timeslot}
\STATE $S^\prime \leftarrow S$\label{greedy:cachesize}
\STATE Calculate $\mathcal{P}_{tf}, f \in \mathcal{F}$\label{greedy:pop}
    \STATE PBC: sort contents based on their popularity and put them in the sorted order in set $\mathcal{F}$\label{greedy:sort}
  \STATE  RBC: select contents randomly proportionally to their popularity and put following resulting order in set $\mathcal{F}$\label{greedy:sort}

\FOR{$f=1$ to $F$}
        \IF {$l_f> S^\prime$}
            \STATE $x_{tf} \leftarrow 0$\label{greedy:cond1}
        \ELSIF{($l_f\le S^\prime$ and $x_{(t-1)f}=1$)}
            \STATE $x_{tf} \leftarrow 1$\label{greedy:cond2.1}
            \STATE $S^\prime \leftarrow S^\prime-l_f$\label{greedy:cond2.2}
        \ELSIF{($l_f\le S$ and $x_{(t-1)f}=0$)}
                \STATE $\Psi \leftarrow\{i \in \{f+1,\dots,F\}|x_{(t-1)i}=1\}$\label{greedy:cond3.1}
\STATE $E^{del}\leftarrow0$\label{greedy:cond3.2}
\STATE $l^{del}\leftarrow0$
        \WHILE{($l^{del}\le l_f$ and $\abs{\Psi}>0$)}
        \STATE $E_{t}^{del}\leftarrow E_{t}^{del}+ \underset{f \in \Psi}\min\{\mathcal{P}_{tf}\}$
\STATE $f^\prime \leftarrow \underset{f \in \Psi}\argmin\{\mathcal{P}_{tf}\}$
\STATE $l^{del}\leftarrow l^{del}+l_{f^\prime}$
\STATE $\Psi \leftarrow \Psi  \setminus \{f^\prime\}$
        \ENDWHILE\label{greedy:cond3.3}

        \IF {$P_{tf}\ge E^{del} $}\label{greedy:cond3.4}
            \STATE $x_{tf} \leftarrow 1$
             \STATE $S^\prime \leftarrow S^\prime-l_f$
            \ELSE
            \STATE $x_{tf} \leftarrow 0$
            \ENDIF\label{greedy:cond3.5}
        \ENDIF

    \ENDFOR
\ENDFOR
\RETURN $\bm{x}$
\end{algorithmic}
\end{algorithm}

\section{Performance Evaluation}\label{sec:performance}
In this section, we conduct simulations to evaluate the performance of RCGA, PBC, and RBC by comparing them to the lower bound of global optimum; the lower bound is hereafter referred to as LB. As explained in Section \ref{sec:CGAandERA}, the LB is provided by the solution of the first iteration of Algorithm \ref{alg_CGAandERA}. In general, deviations of RCGA, PBC, and RBC from global optimum are hard to obtain, because it is difficult to calculate the global optimum of SCCD as it is an NP-hard problem. Therefore, we use the LB to measure the effectiveness of the algorithms because the deviation
to the global optimum cannot exceed the deviation to the LB. 
Hereafter, we refer to the relative deviations of RCGA, PBC, and RBC from LB as the (worst-case) optimality gaps.

\subsection{Simulation Setup}
For the simulation setup, we set $T=24$ where each time slot has a length of one hour \cite{Shukla2017Proactive,GhafourAhani2018}. Similar to the works in \cite{8629363,6883600}, we set $U=600$ and $F=200$ where the sizes of contents are uniformly generated within interval $[1,10]$. The capacity of the cache is set as $S=\rho \sum_{f \in \mathcal{F}}l_f$. Here, $\rho \in [0,1]$ is a parameter that shows the size of cache in relation to the total size of all contents. The number of requests for each user is uniformly distributed in interval $[1,10]$. $o(u,r)$, $u \in \mathcal{U}$ and $r\in \mathcal{R}_u$,  are randomly selected between time slots $1$ and $T$. The deadlines of content requests are uniformly selected in interval $[o(u,r),\alpha(T-o(u,r))]$ in which $\alpha$ indicates the tightness of deadlines. We will show the impact of $\alpha$ on the system cost.

 Same as many works (e.g.,\cite{KarthikeyanShanmugam2013}) in the literature, the content popularity distribution is modeled by a ZipF distribution, i.e., the probability that a user requests the $f$-th content is $\frac{f^{-\gamma}}{\sum_{i \in \mathcal{F}}i^{-\gamma}}$. Here $\gamma$ is the shape parameter of the ZipF distribution and is set to $\gamma=0.56$ \cite{KarthikeyanShanmugam2013}. The requests for contents
 are generated with varying content popularity over time.
 We will vary the parameters $\alpha$, $T$, $U$, $F$, $\rho$, and $\gamma$ in the simulations to show their impact on the system cost. Table~\ref{table:PE} summarizes the definitions of parameters for reference.

\begin{table}[ht!]
\centering
\caption{Definition of Parameters.}
\begin{tabular}{cl}
\hline
Symbol&Definition\\
\hline
$T$                     & number of time slots\\
$U$                     & number of users\\
$F$                     & number of contents\\
$S$                     & cache capacity\\
$\alpha$                & tightness of deadlines\\
$\rho$                  &cache capacity in relation to the total size \\
&of contents\\
$\gamma$               & shape parameter of ZipF distribution\\
$c_s$                     &downloading cost form server\\
$c_b$                     &downloading cost from base station
\\ \hline
\end{tabular}

\label{table:PE}
\end{table}

\subsection{ Performance Comparison}
The performance results of algorithms are reported in Figs.~$\ref{impact_deadline}\text{-}\ref{impact_pop}$. The lines in black, green, blue, and red represent the costs originating from the LB, RCGA, PBC, and RBC, respectively. The curves of RCGA and the LB are virtually overlapping in all figures, and the optimality gap of RCGA is consistently at most 1.6\%, thus the RCGA performance is impressive when it comes to solution quality.

Fig.~\ref{impact_deadline} shows the impact of tightness of deadlines on the cost. When $\alpha$ increases from $0$ to $1$, the costs obtained from RCGA, PBC, and RBC decrease by $31.9\%$, $35.9\%$, and $33.5\%$, respectively. The reason is that with less stringent deadline, the system has more opportunities to satisfy the requests via caching. The optimality gap of RCGA increases slightly from $0.6\%$ to $1.6\%$, while the corresponding values for PBC and RBC decrease from $26.1\%$ and $27.2\%$ to $19.3\%$ and $24.6\%$, respectively.
\begin{figure}[t]
\centering
\includegraphics[scale=0.48]{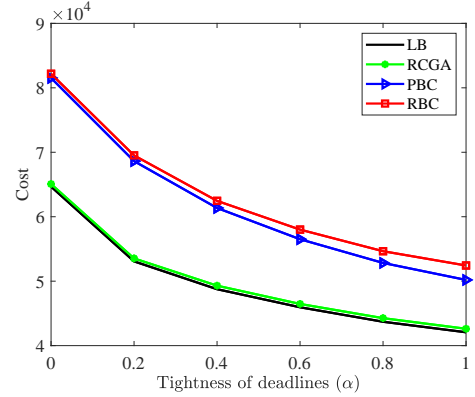}
\begin{center}
\caption{Impact of $\alpha$ on cost when $T=24, U=600$, $F=200$, $\rho=0.5$, $\gamma=0.56$, $c_s=10$,\text{~and~}$c_b=1$.}
\label{impact_deadline}
\end{center}
\end{figure}

Fig.~\ref{impact_T} shows the impact of number of time slots on the cost. The costs decrease with respect to the number of time slots. There are two reasons for this: With larger $T$ a) there are more opportunities to update contents of the cache, and b) more requests can be satisfied via the cache during the time period. The optimality gap of RCGA  stays always less than 1\%. However, the gap for PBC is $9.6\%$ for $T=6$ and increases to $20.1\%$ for $T=36$. The reason is that with larger $T$, the problem becomes more difficult which results in a higher optimality gap. The gap from RBC stays around 20.8\% for all values of $T$.

\begin{figure}[t]
\centering
\includegraphics[scale=0.48]{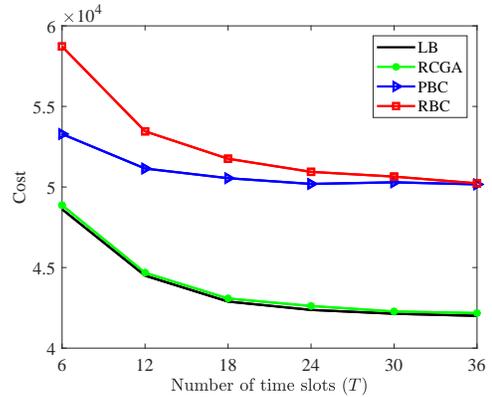}
\begin{center}
\caption{Impact of $T$ on cost when $U=600$, $F=200$, $\rho=0.5$, $\gamma=0.56$, $\alpha=1, $$c_s=10$,\text{~and~}$c_b=1$.}
\label{impact_T}
\end{center}
\end{figure}

Figs~\ref{impact_U} and \ref{impact_F} show the impact of $U$ and $F$ on the cost respectively. As can be seen, the cost increases with respect to $U$ and $F$. Obviously, this is because with larger $U$, the total number of requests increases accordingly which leads to a higher cost. Also, when $F$ increases, the diversity of requested contents increases, and as the cache capacity is limited, more requests need to be downloaded from the server which leads to a higher cost.
In general, the optimality gaps of RCGA, PBC, and RBC are approximately $1\%$, $18.5\%$, and $19.5\%$, for all values of $U$, respectively. The gaps of all algorithms slightly increase with respect to $U$ and this is more apparent for RBC. We can say that even if the size of problem increases with $U$ (i.e., more difficult), the solution quality of algorithms slightly decreases.

Increasing $F$ from $100$ to $300$, the optimality gap of RCGA decreases from $1.6\%$ to $0.2\%$, while the optimality gaps of PBC and RBC increase from $15.4\%$ and $18.6\%$ to $19.8\%$ and $21.1\%$, respectively. This shows that RCGA can effectively utilize the cache capacity, while PBC and RBC are not able to achieve this. In fact, with larger $F$, the diversity of requests increases and the problem becomes more challenging.
\begin{figure}[t]
\centering
\includegraphics[scale=0.48]{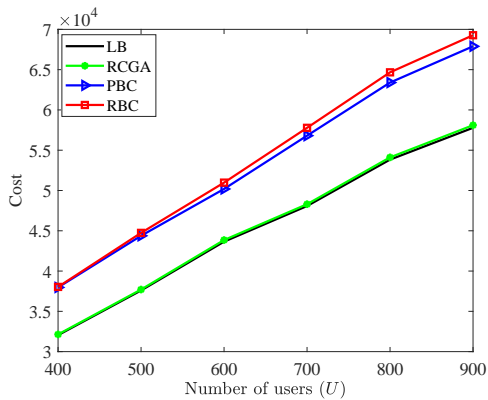}
\begin{center}
\caption{Impact of $U$ on cost when $T=24$, $F=200$, $\rho=0.5$, $\gamma=0.56$, $\alpha=1$, $c_s=10$,\text{~and~}$c_b=1$.}
\label{impact_U}
\end{center}
\end{figure}

\begin{figure}[t]
\centering
\includegraphics[scale=0.48]{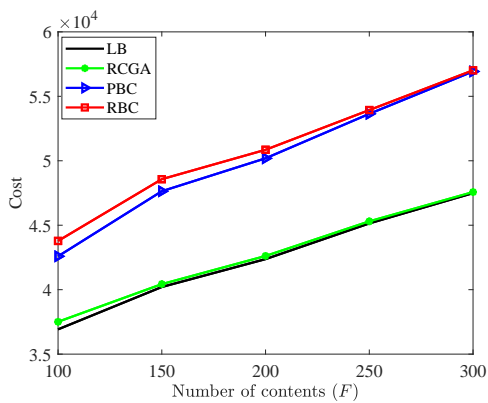}
\begin{center}
\caption{Impact of $F$ on cost when $T=24, U=600$, $\rho=0.5$, $\gamma=0.56$, $\alpha=1$, $c_s=10$,\text{~and~}$c_b=1$.}
\label{impact_F}
\end{center}
\end{figure}

Fig.~\ref{impact_rho} shows the effect of cache size in relation to the total size of contents. Overall, it can be observed that when $\rho$ grows from $0.1$ to $0.9$, the cost and optimality gaps obtained from RCGA, PBC, and RBC all decrease. This is due to the fact that a cache with more space can store more contents. RCGA outperforms both PBC and RBC and has nearly optimal solutions. The optimality gaps of RCGA, PBC, and RBC for $\rho=0.1$ are $1.4\%$, $21.1\%$, and $35.5\%$ respectively and they decrease to $0.1\%$, $4.5\%$, and $4.5\%$ when $\gamma$ increases to $0.9$. The reason is that when $\gamma=0.1$, the capacity is extremely limited, and it is crucial to utilize the capacity efficiently. RCGA is able to achieve this compared to PBC and RBC. When the caching space increases, the costs and optimality gaps start to decrease. When the caching space becomes excessively large such that most of the requested contents can be stored in the cache, optimizing the caching space becomes rather a trivial task and all algorithms have similar performance.

\begin{figure}[t]
\centering
\includegraphics[scale=0.48]{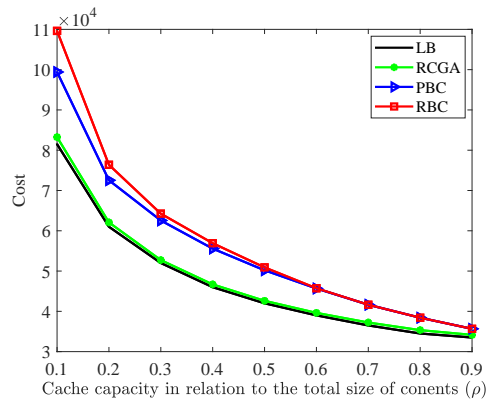}
\begin{center}
\caption{Impact of $\rho$ on cost when $T=24, U=600$, $F=200$, $\gamma=0.56$, $\alpha=1$, $c_s=10$,\text{~and~}$c_b=1$.}
\label{impact_rho}
\end{center}
\end{figure}

Finally, Fig.~\ref{impact_pop} shows the impact of popularity of contents on the cost. As can be seen the costs and optimality gaps decrease with respect to $\gamma$. Note that when $\gamma$ increases, the popularities of contents become more distinct and thus it is easier for the algorithms to determine which contents should be stored in the cache in order to achieve low cost.

\begin{figure}[t]
\centering
\includegraphics[scale=0.48]{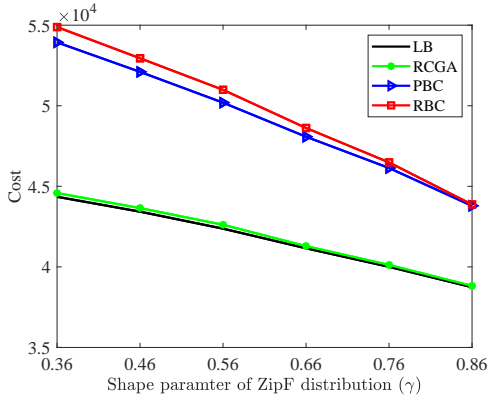}
\begin{center}
\caption{Impact of $\gamma$ on cost when $T=24, U=600$, $F=200$, $\rho=0.5$, $\alpha=1$, $c_s=10$,\text{~and~}$c_b=1$.}
\label{impact_pop}
\end{center}
\end{figure}

\section{Conclusions}\label{sec:conclo}
This paper has investigated a content caching problem where the joint impact of content downloading cost and deadline constraints are accounted for. First, the problem is formulated as an integer linear program (ILP). Even though the ILP can provide optimal solutions, it needs significant computational time for large-scale problem instances. Thus, three algorithms are developed for problem solving. The first one is a solution approach based on a repeated column generation algorithm (RCGA). The second and third algorithms are developed from popularity-based (PBC) and random-based caching (RBC) from the literature. PBC and RBC are simple and fast and thus they are suitable for very large-size problem instances. Simulation results have demonstrated that RCGA outperforms PBC and RBC algorithms and provides nearly optimal solutions within approximately $1.6\%$ gap of global optimum. In addition, simulation results show that one-third of the system cost can be cut off when content requests have longer deadlines. PBC and RBC are suitable for the scenarios when the cache capacity is fairly large or the popular contents are apparent, because for such scenarios they can provide solutions with qualities nearly the same as RCGA.





\bibliographystyle{IEEEtran}
\bibliography{IEEEabrv,ForIEEEBib}
\end{document}